\numberwithin{equation}{section}
\numberwithin{figure}{section}
\theoremstyle{plain}
\newtheorem{thm}{\protect\theoremname}[section]
\theoremstyle{remark}
\newtheorem{rem}[thm]{\protect\remarkname}
\theoremstyle{plain}
\newtheorem{prop}[thm]{\protect\propositionname}
\theoremstyle{definition}
\newtheorem{defn}[thm]{\protect\definitionname}
\theoremstyle{plain}
\newtheorem{lem}[thm]{\protect\lemmaname}
\theoremstyle{plain}
\newtheorem{cor}[thm]{\protect\corollaryname}
\providecommand{\corollaryname}{Corollary}
\providecommand{\definitionname}{Definition}
\providecommand{\lemmaname}{Lemma}
\providecommand{\propositionname}{Proposition}
\providecommand{\remarkname}{Remark}
\providecommand{\theoremname}{Theorem}
\begin{document}
\subjclass[2020]{Primary 46L53. Secondary 46E22, 47A20, 60E05, 81P15, 81P47.}
\title[]{From Classical to Quantum: Polymorphisms in Non-Commutative Probability}
\author{Palle E.T. Jorgensen}
\address{(Palle E.T. Jorgensen) Department of Mathematics, The University of
Iowa, Iowa City, IA 52242-1419, U.S.A.}
\email{palle-jorgensen@uiowa.edu}
\author{James Tian}
\address{(James F. Tian) Mathematical Reviews, 416 4th Street Ann Arbor, MI
48103-4816, U.S.A.}
\email{jft@ams.org}
\begin{abstract}
We present a parallel between commutative and non-commutative polymorphisms.
Our emphasis is the applications to conditional distributions from
stochastic processes. In the classical case, both the measures and
the positive definite kernels are scalar valued. But the non-commutative
framework (as motivated by quantum theory) dictates a setting where
instead now both the measures (in the form of quantum states), and
the positive definite kernels, are operator valued. The non-commutative
theory entails a systematic study of positive operator valued measures,
abbreviated POVMs. And quantum states (normal states) are indexed
by normalized positive trace-class operators. In the non-commutative
theory, the parallel to the commutative/scalar valued theory helps
us understand entanglement in quantum information. A further implication
of our study of the non-commutative framework will entail an interplay
between the two cases, scalar valued, vs operator valued.
\end{abstract}

\keywords{Polymorphisms, positive definite kernels, non-commutative Radon-Nikodym,
POVMs, quantum states, quantum information, operator-valued kernels.}

\maketitle
\tableofcontents{}

\section{Introduction}

Motivated by questions in quantum information, we suggest here a noncommutative
setting for multi-variable measurement problems. With this we extend
both key notions and results from the framework of classical polymorphism
and Rokhlin disintegration theory to a particular non-commutative
setting. As is in the nature of quantum measurements \cite{MR3956386,MR4771441,MR4764347}
such extensions will by necessity involve choices.

We introduce ``Quantum Polymorphisms'' as an extension of classical
polymorphisms (joint probability distributions with specified marginals)
to the quantum (non-commutative) setting. This quantum version (see
\prettyref{def:4-1} and its variants in \prettyref{sec:5}) considers
positive operator-valued measures (POVMs) acting on a Hilbert space,
as well as operator-valued kernels and their associated Radon-Nikodym
derivatives. This provides a way to model conditional distributions
and correlations, with particular relevance to quantum information
and entanglement.

A key question we address is how classical concepts like polymorphisms
and conditional distributions can be extended to the quantum setting.
Specifically, our framework focuses on understanding quantum correlations,
including those arising from entanglement, through tools such as positive
operator-valued measures (POVMs) and operator-valued kernels. Our
disintegration theorems (Theorems \ref{thm:4-2} and \ref{thm:4-5})
provide a rigorous basis for describing conditional structures in
quantum systems, which are crucial for analyzing the behavior of entangled
states. These results help formalize how quantum measurements reveal
correlations that cannot be explained classically, offering insights
into the role of entanglement in quantum information and measurement
theory.

Our paper has two aims: we (1) present a non-commutative framework
(\prettyref{sec:4}), and (2) offer a noncommutative Rokhlin-type
result that serves as a direct integral disintegration (Theorems \ref{thm:4-2},
\ref{thm:4-5}).

For a direct link between Rokhlin/polymorphism on the one hand, and
direct integrals in Hilbert space on the other, see e.g., \cite{MR4676388}.
While the use of the rigorous Rokhlin factorizations/direct integrals,
corresponding to classes of polymorphisms, already serves as a powerful
tool in mathematics, we note that it further provides a precise (mathematical)
formulation of conditioning and Bayes’ rules in probability, i.e.,
describing the probability of an event, based on prior knowledge of
conditions influencing the event; see \eqref{eq:b4}, and \eqref{eq:b5}.
We show in Sections \ref{sec:3}--\ref{sec:5} of the paper that
this approach in fact also carries over to the non-commutative case,
i.e., to quantum observables; see especially \eqref{eq:c7}, \eqref{eq:d2-1},
\eqref{eq:D3}, \eqref{eq:D7}, and \eqref{eq:D8}. To help motivate
our results, we reference our earlier papers \cite{MR4730761,jorgensen2024hilbert,JT2024canonical,ncrn},
as well as the papers cited there \cite{MR2743416,MR4287855,MR4414884,MR4439542,MR582649,MR2790067,MR3390587,MR3956386,MR4130407}. 

Our paper aims to address foundational questions, and especially to
link some key ideas from commutative and non-commutative theories.
One aspect of this entails the role of conditional measurements in
both classical and quantum probability, including a framework of the
Bayes' rules in the two contests. However, we want to stress that
there is an earlier relevant literature dealing with related issues,
cited inside our paper. But here we want to first call attention to
some recent and relevant papers by Parzygnat, especially \cite{MR4642246,MR4627098,MR4393903,MR1998572}
dealing also with a wider context of operator algebras and modular
theory.
\begin{rem}
Although our emphasis in this paper is theoretical, we do briefly
call attention to an element of motivation which is derived in part
from quantum entanglement in physics, as well as broader foundational
questions. But for this part, it is not our intention to go into details.
It is vast and diverse subject. Indeed, it is a very vast subject
with multiple and diverse approaches. And so do justice to details
would be a whole new paper (or book.) The literature is vast, so to
help readers interested in all of this, we instead offer just a small
selection of cited papers, see e.g., \cite{MR3081872,MR4546402},
and the sources cited there. While the central themes in our present
paper deal with issues of non-commutativity, we feel that they are
interesting in their own right. But part of our paper indeed are motivated
by the mathematics of non-commutativity which in turn underpin foundational
notions of entanglement in quantum theory. In more detail, the study
of quantum measurements of physical properties such as position, momentum,
spin, and polarization performed on entangled particles can, which
in some cases display correlations reflecting quantum entanglement.

Sone general points: Non-commutative probability originates with foundational
questions from quantum theory, and it entails the study of a variety
of systems of noncommutative random variables (i.e., in the form of
operators in Hilbert space). Note that this framework contrasts with
that of classical (commutative) stochastic analysis where the appropriate
choices of random variables (which are functions), random fields forming
commutative algebras.
\end{rem}

\section{Commutative polymorphisms}\label{sec:2}

The notion and study of polymorphisms were pioneered by Anatoly M.
Vershik, see \cite{MR2166671} and the papers cited below. The basic
idea is that while functions transform sets, multi-valued maps and
other generalizations are needed to study different types of dynamics
in measure theory. Polymorphisms, systematically studied by Vershik,
have many applications (see e.g., \cite{MR3958137,MR4126821,MR4131038,MR4218424,MR4368036,MR4295177,MR4388381}).
These include Markov maps, transition operators, two-dimensional distributions
of Markov processes, and joinings of measure spaces. Similar concepts
in other fields include correspondences in algebra, bifibrations in
differential geometry, and Markov maps in probability theory. In physics,
polymorphisms relate to the ``coarse-graining'' approach to dynamics,
where maps send a point to measures.

Recall that a given positive measure $\nu$ on $X\times X$ ($\left(X,\mathscr{B}\right)$
being a measurable space) relative to the product $\sigma$-algebra
$\mathscr{B}_{2}$ is said to be a polymorphism for $\mu_{1}$ and
$\mu_{2}$ when $\mu_{i}$ are the respective marginal measures $\mu_{i}:=\nu\circ\pi_{i}^{-1}$
relative to the projections $\pi_{i}\left(\left(x_{1},x_{2}\right)\right)=x_{i}$,
$i=1,2$. 

Assume $\left(X,\mathscr{B}\right)$ is a standard measurable space.
Let $\mu_{1}\leftarrow\nu\rightarrow\mu_{2}$ be a polymorphism, then
the marginal measures $\mu_{i}$ yield conditional measures $\nu\left(\cdot\mid x_{i}\right)$
defined as follows: 
\begin{align}
\nu\left(dx_{1}dx_{2}\right) & =\mu_{1}\left(dx_{1}\right)\nu\left(dx_{2}\mid x_{1}\right)\nonumber \\
 & =\mu_{2}\left(dx_{2}\right)\nu\left(dx_{1}\mid x_{2}\right),\label{eq:b4}
\end{align}
or equivalently, with $\nu$ specified on product sets $A\times B$,
$A,B\in\mathscr{B}$: 
\begin{align}
\nu\left(A\times B\right) & =\int_{A}\mu_{1}\left(dx_{1}\right)\nu\left(B\mid x_{1}\right)\nonumber \\
 & =\int_{B}\mu_{2}\left(dx_{2}\right)\nu\left(A\mid x_{2}\right).\label{eq:b5}
\end{align}
Formulas \eqref{eq:b4} and \eqref{eq:b5} are closely related to
Bayes' rules in probability and the slice decompositions of measures
on product spaces in ergodic theory. Bayes' rules describe the probability
of events based on prior knowledge of conditions related to those
events, hence conditional distributions.

In this paper, we consider a quantum version of polymorphisms within
a non-commutative setting. Extending results from the classical, commutative
framework to the non-commutative case allows us to explore new quantum
dynamics.

We will avoid certain technical details. However, for experts, we
follow Vershik in assuming our spaces are standard measure spaces.
We recall that Rokhlin's papers on disintegrations cover this issue
in depth, especially in ``separable'' measure spaces. Our formulation
\prettyref{thm:4-2} and the following sections assume certain properties
for the measure spaces, specifically standard measure spaces. This
presentation is meant to be accessible to beginners. For more details,
readers can refer to \cite{MR4368036,MR4388381,MR2166671}, and to
the papers cited therein.

\section{Positive operator valued measures (POVMs)}\label{sec:3}

In the following, we offer a non-commutative framework for polymorphisms.
While it is motivated by questions in quantum information, we believe
that it is also of independent interest. Recall (see above), that
in the classical case, we deal with a framework where both the measures
and the positive definite kernels are assumed scalar valued. However,
the non-commutative framework (as motivated by quantum theory) dictates
a setting where both the measures (in the form of quantum states)
and the positive definite kernels are operator-valued. Furthermore,
the non-commutative theory entails a systematic study of positive
operator-valued measures (abbreviated as POVMs).

Our focus below will include the setting when quantum states are normal
states (in the sense of operator algebras), i.e., they are indexed
by normalized positive trace-class operators (see e.g., \cite{MR442701,MR230136}).
For the non-commutative counterparts, the parallel commutative/scalar-valued
framework of polymorphisms in \prettyref{sec:2} above helps us understand
entanglement as it arises in quantum information. A further implication
of our study of the non-commutative framework is that addressing important
questions will entail an interplay between the two cases: scalar-valued
vs. operator-valued.

Let $\left(S,\mathscr{B}_{S}\right)$ be a measurable space, where
$S$ is a set, and $\mathscr{B}_{S}$ is a $\sigma$-algebra of subsets
of $S$. Fix a separable Hilbert space $\mathscr{H}$, and let $\mathscr{L}\left(\mathscr{H}\right)$
denote the algebra of all bounded operators in $\mathscr{H}$. By
$\mathscr{H}$-valued (weakly) measurable functions
\[
F:\left(S,\mathscr{B}_{S}\right)\rightarrow\mathscr{H},
\]
we mean that $s\mapsto\left\langle h,F\left(s\right)\right\rangle _{\mathscr{H}}$
is measurable, for all $h\in\mathscr{H}$. In particular, both $s\mapsto\left\Vert F\left(s\right)\right\Vert _{\mathscr{H}}$
and $s\mapsto\left\langle F_{1}\left(s\right),F_{2}\left(s\right)\right\rangle _{\mathscr{H}}$
are measurable. 
\begin{rem}
\label{rem:b1}In what follows, we shall make use of weak (Pettis)
integrals, a widely known theory for which we omit details but refer
to the following references: \cite{MR4469361,MR4442762,MR3941956}.
These integrals play a crucial role in the subsequent equations, particularly
\eqref{eq:BB1} and \eqref{eq:BB2}. For additional details regarding
operator valued integrals, readers are referred to \cite{MR4576568,MR4518571,MR4514960}. 
\end{rem}

Suppose that $F$ and $G$ are measurable $\mathscr{H}$-valued functions
$\left(S,\mathscr{B}_{S}\right)\rightarrow\mathscr{H}$. Further,
let $\mu$ be a positive measure on $\left(S,\mathscr{B}_{S}\right)$.
The induced covariance operator $C=C_{F,G}$ is defined as follows:
For $\varphi\in\mathscr{H}$, 
\begin{equation}
C\left(\varphi\right)\coloneqq\int_{S}\left\langle F\left(s\right),\varphi\right\rangle _{\mathscr{H}}G\left(s\right)\mu\left(ds\right).\label{eq:BB1}
\end{equation}

\begin{prop}
The operator $C=C_{F,G}$ is of trace class from $\mathscr{H}$ to
$\mathscr{H}$ if both $F$ and $G$ satisfy 
\begin{equation}
\int\left\Vert F\left(s\right)\right\Vert _{\mathscr{H}}^{2}\mu\left(ds\right)<\infty.\label{eq:BB2}
\end{equation}
In this case, 
\[
\text{trace}_{\mathscr{H}\rightarrow\mathscr{H}}\left(C\right)=\int_{S}\left\langle F\left(s\right),G\left(s\right)\right\rangle _{\mathscr{H}}\mu\left(ds\right).
\]
\end{prop}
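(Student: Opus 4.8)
The plan is to show first that $C=C_{F,G}$ is well-defined and bounded, then exhibit it as an absolutely convergent sum of rank-one operators, and finally compute the trace by choosing a convenient orthonormal basis. For the boundedness and well-definedness: for fixed $\varphi$, the integrand $s\mapsto \langle F(s),\varphi\rangle_{\mathscr H} G(s)$ is weakly measurable, and by Cauchy--Schwarz $\|\langle F(s),\varphi\rangle_{\mathscr H} G(s)\|_{\mathscr H}\le \|\varphi\|\,\|F(s)\|\,\|G(s)\|$, whose $\mu$-integral is finite by \eqref{eq:BB2} applied to both $F$ and $G$ together with another Cauchy--Schwarz in $L^2(\mu)$; hence the Pettis integral in \eqref{eq:BB1} exists and $\|C\varphi\|\le \big(\int\|F\|^2 d\mu\big)^{1/2}\big(\int\|G\|^2 d\mu\big)^{1/2}\|\varphi\|$.

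Next I would fix an orthonormal basis $\{e_n\}$ of $\mathscr H$ and expand $F(s)=\sum_n \langle e_n,F(s)\rangle e_n$ in the $L^2(\mu;\mathscr H)$ sense (this is where separability of $\mathscr H$ is used, and where I need that $s\mapsto\|F(s)\|^2$ is integrable so the partial sums converge in $L^2(\mu;\mathscr H)$). Substituting into \eqref{eq:BB1} and interchanging sum and integral — justified by the $L^2$ convergence and the continuity of $\varphi\mapsto \int\langle F(s),\varphi\rangle G(s)\,d\mu(s)$ — gives
\begin{equation}
C = \sum_n \theta_{G_n, e_n},\qquad G_n := \int_S \overline{\langle e_n,F(s)\rangle_{\mathscr H}}\, G(s)\,\mu(ds),\label{eq:planrank1}
\end{equation}
where $\theta_{u,v}\varphi = \langle v,\varphi\rangle u$ denotes the rank-one operator. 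One then checks $\sum_n \|G_n\| \cdot \|e_n\|$ (or better, $\sum_n \|\theta_{G_n,e_n}\|_1 = \sum_n \|G_n\|$) is finite, again by Cauchy--Schwarz and \eqref{eq:BB2}, which shows $C$ is trace class as a norm-convergent sum of trace-class operators. Using the basis $\{e_n\}$ to compute the trace, $\operatorname{trace}(C)=\sum_n \langle e_n, C e_n\rangle = \sum_n \langle e_n, G_n\rangle = \sum_n \int_S \overline{\langle e_n,F(s)\rangle}\,\langle e_n,G(s)\rangle\,\mu(ds)$, and interchanging sum and integral once more (Fubini--Tonelli, with the integrable dominating function $\|F(s)\|\,\|G(s)\|$) collapses the $n$-sum via Parseval to $\int_S \langle F(s),G(s)\rangle_{\mathscr H}\,\mu(ds)$, as claimed.

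The main obstacle is the bookkeeping around the two interchanges of sum and integral for Pettis (weak) integrals: one must be careful that the partial sums $F_N(s)=\sum_{n\le N}\langle e_n,F(s)\rangle e_n$ converge to $F$ not merely pointwise but in a mode strong enough (convergence in $L^2(\mu;\mathscr H)$, equivalently in the Bochner sense after noting $F_N$ is strongly measurable and $\|F_N(s)\|\le\|F(s)\|$) to pass the limit through the integral defining $C$, and that the resulting series \eqref{eq:planrank1} converges in operator norm (indeed in trace norm). Once the $L^2(\mu;\mathscr H)$ framework is set up — which is exactly where hypothesis \eqref{eq:BB2} enters — everything else is a routine application of Cauchy--Schwarz, Parseval, and dominated convergence. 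An alternative, slightly cleaner route avoids \eqref{eq:planrank1} entirely: show directly that the sesquilinear form $(\varphi,\psi)\mapsto \int_S \langle F(s),\varphi\rangle\langle\psi,G(s)\rangle\,d\mu$ (which represents $\langle\psi,C\varphi\rangle$) is, up to the factorization $C=B^*A$ with $A,B:\mathscr H\to L^2(\mu;\mathscr H)$ the bounded operators $A\varphi(s)=\langle F(s),\varphi\rangle^{-}\!\cdots$, better phrased as $A\varphi = \langle \varphi,\cdot\rangle$-type maps into $L^2$, a product of Hilbert--Schmidt operators, hence trace class, with $\operatorname{trace}(B^*A)=\langle B,A\rangle_{HS}$; but the basis computation above is the most transparent and is the version I would write out.
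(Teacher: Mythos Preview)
Your main route has a genuine gap at the step ``one then checks $\sum_n \|G_n\|$ is finite, again by Cauchy--Schwarz and \eqref{eq:BB2}.'' This is false in general. Take $S=[0,1]$ with Lebesgue measure, a unit vector $\psi\in L^2[0,1]$, and $F=G$ with $F(s)=\psi(s)\,v$ where $v=\sum_n n^{-1}e_n\in\mathscr H$. Then $\int\|F(s)\|^2\,d\mu=\|v\|^2<\infty$, yet $G_n=\int\overline{\langle e_n,F(s)\rangle}\,G(s)\,d\mu(s)=n^{-1}v$, so $\sum_n\|G_n\|=\|v\|\sum_n n^{-1}=\infty$. The obstruction is that Cauchy--Schwarz only gives $\|G_n\|\le\big(\int|\langle e_n,F\rangle|^2\big)^{1/2}\big(\int\|G\|^2\big)^{1/2}$, and summability of the squares $\int|\langle e_n,F\rangle|^2$ does not imply summability of their square roots. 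Consequently your series \eqref{eq:planrank1} need not converge in trace norm, and the argument for $C$ being trace class collapses. (The later trace computation via $\sum_n\langle e_n,Ce_n\rangle$ is fine \emph{once} trace class is known, but you have not established it.)

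Your alternative route---factor $C=B^*A$ with $A,B:\mathscr H\to L^2(\mu)$ given by $(A\varphi)(s)=\langle F(s),\varphi\rangle$ and $(B\psi)(s)=\langle G(s),\psi\rangle$, observe $\|A\|_{HS}^2=\int\|F\|^2\,d\mu$ and similarly for $B$, hence $C$ is a product of Hilbert--Schmidt operators---is correct and is the version you should write out. The paper takes yet a third path: it regards the integrand in \eqref{eq:BB1} as the rank-one operator $|G(s)\rangle\langle F(s)|$, notes that its trace norm is $\|F(s)\|\,\|G(s)\|$, and uses \eqref{eq:BB2} (with Cauchy--Schwarz in $L^2(\mu)$) to see that the integral converges as a Bochner integral in the trace-class norm; the trace then passes under the integral by continuity. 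Both valid approaches exploit the same $L^2$ control, but the paper integrates rank-one operators over $S$ directly, while the Hilbert--Schmidt factorization repackages the same estimate as $\operatorname{trace}(B^*A)=\langle A,B\rangle_{HS}$.
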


\begin{proof}
The operators under the integral sign in \eqref{eq:BB1} are rank-1
operators, written in Dirac's ket-bra notation as 
\[
\left|G\left(\cdot\right)\left\rangle \right\langle F\left(\cdot\right)\right|.
\]
Note that, for all $s\in S$, we have: 
\[
\text{trace}\left|G\left(s\right)\left\rangle \right\langle F\left(s\right)\right|=\left\langle F\left(s\right),G\left(s\right)\right\rangle _{\mathscr{H}}.
\]
The assumption \eqref{eq:BB2} then allows us to take the trace under
the integral sign in \eqref{eq:BB1}, and the desired result follows.
\end{proof}
\begin{defn}
\label{def:pvm}A projection-valued measure (PVM) on $\left(S,\mathscr{B}_{S}\right)$
is a function 
\[
Q\colon\mathscr{B}_{S}\rightarrow\mathscr{L}\left(\mathscr{H}\right)
\]
with the following properties:
\begin{enumerate}
\item $Q\left(\emptyset\right)=0$, $Q\left(S\right)=I$, where $I$ is
the identity operator on $\mathscr{H}$.
\item \label{enu:pvm2}$Q\left(A\right)^{2}=Q\left(A\right)^{*}=Q\left(A\right)$,
$\forall A\in\mathscr{B}_{S}$.
\item \label{enu:pvm3}$Q\left(A\cap B\right)=Q\left(A\right)Q\left(B\right)$,
$\forall A,B\in\mathscr{B}_{S}$.
\item $Q$ is $\sigma$-additive. That is, for all $\left\{ A_{i}\right\} _{i\in\mathbb{N}}$
in $\mathscr{B}_{S}$, with $A_{i}\cap A_{j}=\emptyset$ for $i\neq j$,
we have
\begin{equation}
Q\left(\cup_{i}A_{i}\right)=\sum_{i}Q\left(A_{i}\right),\label{eq:B1}
\end{equation}
where the limit holds in the strong operator topology. In particular,
\[
\left\langle h,Q\left(\cup_{i}A_{i}\right)k\right\rangle _{\mathscr{H}}=\sum_{i}\left\langle h,Q\left(A_{i}\right)k\right\rangle _{\mathscr{H}},\;\forall h,k\in\mathscr{H}.
\]
Note that $\left\langle h,Q\left(A_{i}\right)k\right\rangle _{\mathscr{H}_{i}}$
are scalar-valued measures on $\left(S,\mathscr{B}_{S}\right)$. 
\end{enumerate}
\end{defn}

\begin{defn}
A positive operator-valued measure (POVM) on a measurable space $\left(S,\mathscr{B}_{S}\right)$
is a function 
\[
Q\colon\mathscr{B}_{S}\rightarrow\mathscr{L}\left(\mathscr{H}\right)
\]
satisfying the following conditions:
\begin{enumerate}
\item $Q\left(\emptyset\right)=0$, $Q\left(S\right)=I$;
\item $Q\left(E\right)\in\mathscr{L}\left(\mathscr{H}\right)_{+}$ for every
$E\in\mathscr{B}_{S}$, i.e., $Q\left(E\right)$ is a positive (self-adjoint)
operator on $\mathscr{H}$; 
\item Equation \eqref{eq:B1} holds true.
\end{enumerate}
\end{defn}

We point the reader to the following fundamental result, \prettyref{thm:B5}.
A detailed proof is omitted, as it is widely available in the literature;
see, e.g., \cite{MR3018003,MR4502940,MR4254690,MR3991119,MR2371996,MR1435288}. 
\begin{thm}[Dilation]
\label{thm:B5}Suppose $Q\colon S\rightarrow\mathscr{L}\left(\mathscr{H}\right)$
is a POVM, then there is a ``bigger'' Hilbert space $\mathscr{K}$,
an isometry $V\colon\mathscr{H}\rightarrow\mathscr{K}$ and a projection-valued
measure $P\colon\mathscr{B}_{S}\rightarrow\mathscr{L}\left(\mathscr{K}\right)$,
such that 
\[
Q\left(A\right)=V^{*}P\left(A\right)V,\quad\forall A\in\mathscr{B}_{S}.
\]
\end{thm}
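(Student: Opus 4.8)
The plan is to run a Kolmogorov/GNS-type construction directly from the positive-definiteness encoded in the POVM $Q$. Let $\mathscr{H}_{0}$ denote the algebraic span of the formal symbols $\chi_{A}\otimes h$, $A\in\mathscr{B}_{S}$, $h\in\mathscr{H}$ (equivalently, the $\mathscr{H}$-valued simple functions on $S$), and define a sesquilinear form on $\mathscr{H}_{0}$ by
\[
\left\langle \chi_{A}\otimes h,\chi_{B}\otimes k\right\rangle _{\mathscr{K}}:=\left\langle h,Q\left(A\cap B\right)k\right\rangle _{\mathscr{H}},
\]
extended by (conjugate-)linearity in the appropriate variable. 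The first, and main, step is to check that this form is positive semidefinite. Given $\xi=\sum_{i=1}^{n}\chi_{A_{i}}\otimes h_{i}$, pass to the finite Boolean algebra generated by $A_{1},\dots,A_{n}$, write each $A_{i}$ as a disjoint union of its atoms $E_{1},\dots,E_{m}$, and use finite additivity of $Q$ together with positivity of each $Q\left(E_{\ell}\right)$ to obtain
\[
\left\langle \xi,\xi\right\rangle _{\mathscr{K}}=\sum_{\ell=1}^{m}\Big\langle \sum_{i:\,E_{\ell}\subseteq A_{i}}h_{i},\,Q\left(E_{\ell}\right)\sum_{j:\,E_{\ell}\subseteq A_{j}}h_{j}\Big\rangle _{\mathscr{H}}\geq0.
\]
Quotienting $\mathscr{H}_{0}$ by the null space $\left\{ \xi:\left\langle \xi,\xi\right\rangle _{\mathscr{K}}=0\right\}$ and completing yields the Hilbert space $\mathscr{K}$.

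Next I would produce the isometry and the PVM. Let $Vh$ be the image in $\mathscr{K}$ of $\chi_{S}\otimes h$; then $\left\langle Vh,Vk\right\rangle _{\mathscr{K}}=\left\langle h,Q\left(S\right)k\right\rangle _{\mathscr{H}}=\left\langle h,k\right\rangle _{\mathscr{H}}$ since $Q\left(S\right)=I$, so $V$ is an isometry of $\mathscr{H}$ into $\mathscr{K}$. For $A\in\mathscr{B}_{S}$ define $P\left(A\right)$ on $\mathscr{H}_{0}$ by $P\left(A\right)\left(\chi_{B}\otimes h\right):=\chi_{A\cap B}\otimes h$. A short computation shows $P\left(A\right)$ is symmetric and idempotent for the pre-inner product, whence $\left\Vert P\left(A\right)\xi\right\Vert \le\left\Vert \xi\right\Vert$ by Cauchy--Schwarz; thus $P\left(A\right)$ kills the null space and extends to an orthogonal projection on $\mathscr{K}$, and properties (1)--(3) of \prettyref{def:pvm} (namely $P\left(\emptyset\right)=0$, $P\left(S\right)=I$, $P\left(A\right)^{2}=P\left(A\right)^{*}=P\left(A\right)$ and $P\left(A\cap B\right)=P\left(A\right)P\left(B\right)$) are immediate from the definition, the first because $\chi_{\emptyset}\otimes h$ lies in the null space. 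The $\sigma$-additivity \eqref{eq:B1} for $P$ is checked first on $\mathscr{H}_{0}$, where it reduces to the $\sigma$-additivity \eqref{eq:B1} of $Q$, and then extended to all of $\mathscr{K}$ using the uniform bound $\left\Vert P\left(A\right)\right\Vert \le1$ and an $\varepsilon/3$ argument on the dense subspace.

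Finally, the intertwining identity is a one-line verification: for $h,k\in\mathscr{H}$,
\[
\left\langle k,V^{*}P\left(A\right)Vh\right\rangle _{\mathscr{H}}=\left\langle \chi_{S}\otimes k,\chi_{A\cap S}\otimes h\right\rangle _{\mathscr{K}}=\left\langle k,Q\left(A\right)h\right\rangle _{\mathscr{H}},
\]
so $V^{*}P\left(A\right)V=Q\left(A\right)$ for all $A\in\mathscr{B}_{S}$, as claimed.

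\textbf{Main obstacle.} The only genuinely non-formal point is the positive-semidefiniteness of the operator-valued kernel $\left(A,B\right)\mapsto Q\left(A\cap B\right)$, i.e.\ the atom-refinement argument that reduces an arbitrary finite family $\{A_{i}\}$ to a disjoint one so that positivity of the single operators $Q\left(E_{\ell}\right)$ can be invoked; the remaining steps are bookkeeping. (Alternatively, one can deduce the theorem from Stinespring's dilation theorem applied to a completely positive map built from $Q$, but the direct construction above is self-contained and exhibits the PVM explicitly.)
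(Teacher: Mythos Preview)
Your proof is correct and is precisely the construction the paper points to: the paper omits the proof, referring to the literature, but part \eqref{enu:3-7-4} of \prettyref{rem:3-7} notes that the dilation space may be taken as the RKHS of the operator-valued kernel $K(A,B)=Q(A\cap B)$, which is exactly the space you build, with $P(A)$ acting by intersection. One remark worth making: the paper's own verification that this kernel is positive definite (\prettyref{lem:4-2}) actually invokes \prettyref{thm:B5} itself, so your direct atom-refinement argument is the non-circular ingredient that makes the route sketched in part \eqref{enu:3-7-4} of \prettyref{rem:3-7} self-contained.
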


In this context, we recall the following result on positive operator-valued
kernels, and an associated version of non-commutative Radon-Nikodym
derivatives. 
\begin{thm}[\cite{JT2024canonical,jorgensen2024hilbert}]
\label{thm:c6}Let $K:S\times S\rightarrow\mathcal{L}\left(\mathscr{H}\right)$
be an operator-valued positive kernel. Then there exists a reproducing
kernel Hilbert space (RKHS) $\mathcal{K}$, and a family of mappings
$V_{s}:\mathscr{H}\rightarrow\mathcal{K}$, $s\in S$, such that 
\begin{equation}
K\left(s,t\right)=V_{s}^{*}V_{t}.\label{eq:c4}
\end{equation}

Furthermore, for p.d. kernels $K,L:S\times S\rightarrow\mathcal{L}\left(\mathscr{H}\right)$,
it holds that 
\begin{equation}
L\leq K\label{eq:c5}
\end{equation}
if and only if there exists an operator $\Gamma$ in $\mathscr{K}$,
$0\leq\Gamma\leq I_{\mathscr{K}}$, such that 
\begin{equation}
L\left(s,t\right)=V_{s}^{*}\Gamma V_{t}.\label{eq:c6}
\end{equation}
\end{thm}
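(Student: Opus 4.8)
The plan is in two parts: first establish the Kolmogorov-type factorization \eqref{eq:c4} by the operator-valued reproducing-kernel construction, then extract the Radon--Nikodym operator $\Gamma$ in \eqref{eq:c6} by a Lax--Milgram/Riesz argument inside the resulting RKHS.

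\emph{Construction of $\mathcal{K}$ and the $V_{s}$.} Let $\mathcal{K}_{0}$ be the linear span, inside the space of $\mathscr{H}$-valued functions on $S$, of the ``kernel sections'' $K(\cdot,s)h$ for $s\in S$, $h\in\mathscr{H}$. On $\mathcal{K}_{0}$ put the sesquilinear form determined by
\[
\Big\langle \textstyle\sum_{i}K(\cdot,s_{i})h_{i},\ \sum_{j}K(\cdot,t_{j})k_{j}\Big\rangle_{\mathcal{K}}:=\sum_{i,j}\big\langle h_{i},K(s_{i},t_{j})k_{j}\big\rangle_{\mathscr{H}}.
\]
Positive-definiteness of the operator kernel $K$ is precisely the assertion that this form is positive semidefinite; Cauchy--Schwarz for it then shows that it is well defined (independent of the chosen representation of an element of $\mathcal{K}_{0}$) and that the vectors of zero length form a subspace $N$. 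Completing $\mathcal{K}_{0}/N$ gives the Hilbert space $\mathcal{K}$. Define $V_{t}k:=K(\cdot,t)k$ (class mod $N$); each $V_{t}$ is bounded since $\|V_{t}k\|_{\mathcal{K}}^{2}=\langle k,K(t,t)k\rangle_{\mathscr{H}}\le\|K(t,t)\|\,\|k\|^{2}$, and the defining identity reads $\langle V_{s}h,V_{t}k\rangle_{\mathcal{K}}=\langle h,K(s,t)k\rangle_{\mathscr{H}}$ for all $h,k$, i.e. $V_{s}^{*}V_{t}=K(s,t)$, which is \eqref{eq:c4}.

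\emph{The operator $\Gamma$.} Suppose now $0\le L\le K$. On the \emph{same} space $\mathcal{K}_{0}$ introduce the form $\langle\cdot,\cdot\rangle_{L}$ defined exactly as above but with $L$ in place of $K$ on the right-hand side. The hypothesis $L\le K$ gives, for every $a\in\mathcal{K}_{0}$, the ``diagonal'' inequality $0\le\langle a,a\rangle_{L}\le\langle a,a\rangle_{\mathcal{K}}$; via Cauchy--Schwarz for $\langle\cdot,\cdot\rangle_{L}$ this forces $\langle\cdot,\cdot\rangle_{L}$ to be well defined and to annihilate $N$, so it descends to a positive form of norm $\le 1$ on $\mathcal{K}_{0}/N$ and extends by continuity to all of $\mathcal{K}$. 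By the Riesz representation theorem there is a unique $\Gamma\in\mathcal{L}(\mathcal{K})$ with $\langle a,b\rangle_{L}=\langle\Gamma a,b\rangle_{\mathcal{K}}$ for all $a,b\in\mathcal{K}$, and the bounds $0\le\langle a,a\rangle_{L}\le\langle a,a\rangle_{\mathcal{K}}$ translate into $0\le\Gamma\le I_{\mathcal{K}}$. Taking $a=V_{s}h$ and $b=V_{t}k$ then yields $\langle h,L(s,t)k\rangle_{\mathscr{H}}=\langle\Gamma V_{t}k,V_{s}h\rangle_{\mathcal{K}}=\langle h,V_{s}^{*}\Gamma V_{t}k\rangle_{\mathscr{H}}$, i.e. $L(s,t)=V_{s}^{*}\Gamma V_{t}$, which is \eqref{eq:c6}.

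\emph{Converse and main obstacle.} For the converse direction, if $L(s,t)=V_{s}^{*}\Gamma V_{t}$ with $0\le\Gamma\le I_{\mathcal{K}}$, then for any finite family $(s_{i},h_{i})$, setting $\xi:=\sum_{i}V_{s_{i}}h_{i}\in\mathcal{K}$ one computes $\sum_{i,j}\langle h_{i},L(s_{i},s_{j})h_{j}\rangle_{\mathscr{H}}=\langle\Gamma\xi,\xi\rangle_{\mathcal{K}}\in[0,\|\xi\|_{\mathcal{K}}^{2}]=\big[0,\sum_{i,j}\langle h_{i},K(s_{i},s_{j})h_{j}\rangle_{\mathscr{H}}\big]$, whence $0\le L\le K$. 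The one genuinely delicate step is the well-definedness of the $L$-form on the quotient $\mathcal{K}_{0}/N$: one must know it is insensitive both to the highly non-unique representation of an element of $\mathcal{K}_{0}$ as a sum $\sum K(\cdot,s_{i})h_{i}$ and to modification by $N$. Both follow from the diagonal inequality together with Cauchy--Schwarz for the positive form $\langle\cdot,\cdot\rangle_{L}$; once this is secured the Riesz step is routine, and separability of $\mathscr{H}$ enters only to guarantee that $\mathcal{K}$ is separable.
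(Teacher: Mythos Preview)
The paper does not actually prove this theorem: it is stated with a citation to the authors' earlier work \cite{JT2024canonical,jorgensen2024hilbert} and is immediately followed by \prettyref{rem:3-7} with no intervening proof environment. Your argument is therefore not being compared against a proof in the paper but supplies one where the paper gives only a reference.

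That said, your proof is correct and is the standard route: the Kolmogorov/GNS-type completion of the span of kernel sections yields $\mathcal{K}$ and the factorization $K(s,t)=V_{s}^{*}V_{t}$, and the Radon--Nikodym operator $\Gamma$ is produced by representing the bounded positive form $\langle\cdot,\cdot\rangle_{L}$ on $\mathcal{K}$ via Riesz. The delicate point you flag---well-definedness of the $L$-form on $\mathcal{K}_{0}/N$---is handled exactly as you say, since $L$ is assumed p.d.\ (so Cauchy--Schwarz for $\langle\cdot,\cdot\rangle_{L}$ is available) and the diagonal domination $\langle a,a\rangle_{L}\le\langle a,a\rangle_{\mathcal{K}}$ forces $N$ into the null space of the $L$-form. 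The converse computation is routine. Your closing remark about separability is harmless but unnecessary: nothing in the statement or in your argument requires $\mathcal{K}$ to be separable.
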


\begin{rem}
\label{rem:3-7}\ 
\end{rem}

\begin{enumerate}
\item A kernel $K:S\times S\rightarrow\mathcal{L}\left(\mathscr{H}\right)$
is said to be positive definite if
\[
\sum_{i,j=1}^{n}\left\langle h_{i},K\left(s_{i},s_{j}\right)h_{j}\right\rangle _{\mathscr{H}}\geq0
\]
for all $\left(s_{i}\right)_{1}^{n}$ in $S$, $\left(h_{i}\right)_{1}^{n}$
in $\mathscr{H}$, and all $n\in\mathbb{N}$. It is straightforward
to check that if $K$ factors as in \eqref{eq:c4}, then it is positive
definite. Thus, \eqref{eq:c4} is a characterization of $\mathcal{L}\left(\mathscr{H}\right)$-valued
p.d. kernels.
\item \label{enu:3-7-2}The ordering ``$L\leq K$'' in \eqref{eq:c5}
means that 
\[
\sum_{i,j=1}^{n}\left\langle h_{i},L\left(s_{i},s_{j}\right)h_{j}\right\rangle _{\mathscr{H}}\leq\sum_{i,j=1}^{n}\left\langle h_{i},K\left(s_{i},s_{j}\right)h_{j}\right\rangle _{\mathscr{H}}
\]
for all $\left(s_{i}\right)_{1}^{n}$ in $S$, $\left(h_{i}\right)_{1}^{n}$
in $\mathscr{H}$, and all $n\in\mathbb{N}$.
\item The operator $\Gamma$ in \eqref{eq:c6} is positive self-adjoint.
It will be referred to as the Radon-Nikodym derivative of $L$ with
respect to $K$, denoted 
\begin{equation}
\Gamma=\frac{dL}{dK}.\label{eq:c7}
\end{equation}
\item \label{enu:3-7-4}In fact, the space $\mathscr{K}$ in \prettyref{thm:B5}
may be chosen to be a RKHS, and the theorem follows from the general
approach in \prettyref{thm:c6} by setting 
\[
K\left(A,B\right)\coloneqq Q\left(A\cap B\right)
\]
for $A,B\in\mathscr{B}_{S}$.
\end{enumerate}

\section{Quantum polymorphisms}\label{sec:4}

The gist of classical polymorphism in the Rokhlin setting is a disintegration.
In this section, we (1) present a non-commutative framework, and (2)
offer a non-commutative Rokhlin-type result (\prettyref{thm:4-2})
that serves as a direct integral disintegration. As in the nature
of the subject, one must make choices when extending from the commutative
case to the non-commutative case. 

Below, the term ``Quantum Polymorphisms'' is introduced as an extension
of classical polymorphisms. Classical polymorphisms are characterized
by joint probability distributions with specified marginals. In the
quantum case, this extension naturally involves positive operator-valued
measures (POVMs) acting in a Hilbert space, along with operator-valued
kernels and their Radon-Nikodym derivatives. It provides a structured
approach to understanding conditional distributions and correlations,
particularly in the context of quantum information and entanglement.
\begin{defn}[Quantum Polymorphism, $Q$-version]
\label{def:4-1}For $i=1,2$, let $\left(X_{i},\mathscr{B}_{X_{i}}\right)$
be a measurable space with Borel $\sigma$-algebra $\mathscr{B}_{X_{i}}$,
and $\pi_{i}:X\rightarrow X_{i}$ the canonical projection. Let $Q:\mathscr{B}_{X}\rightarrow\mathscr{H}$
be a POVM acting in a Hilbert space $\mathscr{H}$, where $X=X_{1}\times X_{2}$
with the product $\sigma$-algebra $\mathscr{B}_{X}$. Set 
\begin{equation}
Q_{i}=Q\circ\pi_{i}^{-1}\label{eq:b1}
\end{equation}
as a POVM on $\mathscr{B}_{X_{i}}$. We say $\left(Q,Q_{1},Q_{2}\right)$
forms a quantum polymorphism, denoted $Q_{1}\leftarrow Q\rightarrow Q_{2}$. 
\end{defn}

The setting for POVMs is a special case of operator-valued kernels,
as follows:
\begin{lem}
\label{lem:4-2}Let $K\left(A,B\right)=Q\left(A\cap B\right)$, $A,B\in\mathscr{B}_{X}$,
be as in part \eqref{enu:3-7-4} of \prettyref{rem:3-7}. Then $K:\mathscr{B}_{X}\times\mathscr{B}_{X}\rightarrow\mathscr{L}\left(H\right)$
is a positive operator-valued kernel. 
\end{lem}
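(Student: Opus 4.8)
The plan is to verify directly that $K(A,B) := Q(A \cap B)$ satisfies the definition of an $\mathscr{L}(\mathscr{H})$-valued positive definite kernel, i.e.\ the inequality recalled in part (1) of \prettyref{rem:3-7}. Fix $n \in \mathbb{N}$, sets $A_1,\dots,A_n \in \mathscr{B}_X$, and vectors $h_1,\dots,h_n \in \mathscr{H}$; we must show $\sum_{i,j} \langle h_i, Q(A_i \cap A_j) h_j\rangle_{\mathscr{H}} \geq 0$.

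First I would invoke the dilation theorem (\prettyref{thm:B5}): write $Q(A) = V^* P(A) V$ for an isometry $V: \mathscr{H} \to \mathscr{K}$ and a PVM $P$ on $\mathscr{B}_X$ valued in $\mathscr{L}(\mathscr{K})$. Then $Q(A_i \cap A_j) = V^* P(A_i \cap A_j) V = V^* P(A_i) P(A_j) V$, using the multiplicativity property \eqref{enu:pvm3} and the fact that $P(A_j)$ is a self-adjoint projection, so $P(A_i)P(A_j) = P(A_i)^* P(A_j)$. Setting $\xi_i := P(A_i) V h_i \in \mathscr{K}$, the quadratic form becomes $\sum_{i,j} \langle \xi_i, \xi_j\rangle_{\mathscr{K}} = \big\| \sum_i \xi_i \big\|_{\mathscr{K}}^2 \geq 0$, which is the desired positivity. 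Equivalently, one can present this via the factorization $K(A,B) = V_A^* V_B$ with $V_A := P(A) V$, matching the form \eqref{eq:c4} of \prettyref{thm:c6}, which by part (1) of \prettyref{rem:3-7} immediately gives positive definiteness.

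There is essentially no serious obstacle here; the statement is a direct consequence of \prettyref{thm:B5} together with the multiplicative structure of PVMs, and indeed the claim was already foreshadowed in part \eqref{enu:3-7-4} of \prettyref{rem:3-7}. The only point requiring a word of care is that $K$ is a kernel on the index set $\mathscr{B}_X$ (a $\sigma$-algebra), not on $X$ itself, but the abstract framework of \prettyref{thm:c6} applies to an arbitrary index set $S$, so taking $S = \mathscr{B}_X$ causes no difficulty. If one prefers to avoid appealing to the dilation theorem, an equally short alternative is to check positivity by hand: use $\sigma$-additivity to reduce to a finite measurable partition refining $\{A_i\}$, write each $A_i$ as a disjoint union of atoms $E_1,\dots,E_m$ of the partition, and observe that $Q(A_i \cap A_j) = \sum_{E_k \subseteq A_i \cap A_j} Q(E_k)$, so the quadratic form reorganizes as $\sum_k \sum_{i,j : E_k \subseteq A_i, E_k \subseteq A_j} \langle h_i, Q(E_k) h_j\rangle = \sum_k \big\langle \sum_{i: E_k \subseteq A_i} h_i,\, Q(E_k) \sum_{j: E_k \subseteq A_j} h_j \big\rangle \geq 0$, since each $Q(E_k)$ is a positive operator. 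I would likely present the dilation-based argument as the main proof, since it also makes transparent the factorization $K = V^* P V$ that feeds into the later disintegration results.
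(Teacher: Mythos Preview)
Your proposal is correct and follows essentially the same approach as the paper: invoke the dilation $Q=V^{*}PV$ from \prettyref{thm:B5}, use the multiplicativity $P(A_i\cap A_j)=P(A_i)P(A_j)$ of the PVM, and recognize the resulting quadratic form as $\bigl\|\sum_i P(A_i)Vh_i\bigr\|_{\mathscr{K}}^{2}\ge 0$. Your additional partition-based alternative (avoiding dilation) and the explicit factorization $K(A,B)=V_A^{*}V_B$ are not in the paper's proof but are correct and useful observations.
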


\begin{proof}
Write $Q=V^{*}PV$ as in \prettyref{thm:B5}, with $V:\mathscr{H}\rightarrow\mathscr{K}$.
For all $\left(A_{i}\right)_{i=1}^{n}$ in $\mathscr{B}_{X}$, $\left(h_{i}\right)_{i=1}^{n}$
in $\mathscr{H}$, and $n\in\mathbb{N}$, it follows that
\begin{align*}
\sum_{i,j}\left\langle h_{i},K\left(A_{i},A_{j}\right)h_{j}\right\rangle _{\mathscr{H}} & =\sum_{i,j}\left\langle h_{i},Q\left(A_{i}\cap A_{j}\right)h_{j}\right\rangle _{\mathscr{H}}\\
 & =\sum_{i,j}\left\langle h_{i},V^{*}P\left(A_{i}\cap A_{j}\right)Vh_{j}\right\rangle _{\mathscr{H}}\\
 & =\sum_{i,j}\left\langle P\left(A_{i}\right)Vh_{i},P\left(A_{j}\right)Vh_{j}\right\rangle _{\mathscr{K}}\\
 & =\sum_{i}\left\Vert P\left(A_{i}\right)Vh_{i}\right\Vert _{\mathscr{K}}^{2}\geq0
\end{align*}
thus $K$ is positive definite. 
\end{proof}
\begin{thm}
\label{thm:4-2}Let $Q$, $Q_{i}$, $i=1,2$, be as specified in \prettyref{def:4-1}.
Then there exist commuting projection-valued measures $P_{1}:\mathscr{B}_{X_{1}}\rightarrow\mathscr{H}$
and $P_{2}:\mathscr{B}_{X_{2}}\rightarrow\mathscr{H}$, such that
for all $B\in\mathscr{B}_{X_{2}}$, the associated Radon-Nikodym derivative
is given by 
\begin{equation}
\frac{dQ\left(\cdot\times B\right)}{dQ_{1}}=P_{2}\left(B\right).\label{eq:d2-1}
\end{equation}
Similarly, for all $A\in\mathscr{B}_{X_{1}}$, 
\begin{equation}
\frac{dQ\left(A\times\cdot\right)}{dQ_{2}}=P_{1}\left(A\right).\label{eq:D3}
\end{equation}
\end{thm}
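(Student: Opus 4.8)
The plan is to produce the projection-valued measures $P_1, P_2$ from the dilation of $Q$, and then identify their values with the Radon--Nikodym derivatives in the sense of \eqref{eq:c7}. First I would apply \prettyref{thm:B5} to the POVM $Q\colon\mathscr{B}_X\to\mathscr{L}(\mathscr{H})$ to obtain a Hilbert space $\mathscr{K}$, an isometry $V\colon\mathscr{H}\to\mathscr{K}$, and a PVM $P\colon\mathscr{B}_X\to\mathscr{L}(\mathscr{K})$ with $Q(E)=V^*P(E)V$. Since $\mathscr{B}_X=\mathscr{B}_{X_1}\times\mathscr{B}_{X_2}$ is the product $\sigma$-algebra, I would define $\widetilde{P}_1(A)=P(A\times X_2)$ and $\widetilde{P}_2(B)=P(X_1\times B)$; these are PVMs on $\mathscr{K}$ (projections by property \eqref{enu:pvm2} since $A\times X_2$ is in the domain, $\sigma$-additive by \eqref{eq:B1}, normalized since $X_1\times X_2=X$), and they commute because $(A\times X_2)\cap(X_1\times B)=(X_1\times B)\cap(A\times X_2)=A\times B$ together with property \eqref{enu:pvm3} gives $\widetilde{P}_1(A)\widetilde{P}_2(B)=P(A\times B)=\widetilde{P}_2(B)\widetilde{P}_1(A)$. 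Then I would compress back: set $P_i(A)=V^*\widetilde{P}_i(A)V$. The subtlety is that these compressions need not themselves be projection-valued on $\mathscr{H}$ in general, so the precise claim of the theorem forces me to work on the RKHS model of \prettyref{thm:c6} rather than on an arbitrary dilation — see below.

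Next I would connect to the Radon--Nikodym formalism of \prettyref{thm:c6}. Using \prettyref{lem:4-2}, the kernel $K(A,B)=Q(A\cap B)$ on $\mathscr{B}_X\times\mathscr{B}_X$ is operator-valued positive definite, so \prettyref{thm:c6} furnishes a RKHS $\mathcal{K}$ and maps $V_A\colon\mathscr{H}\to\mathcal{K}$ with $K(A,B)=V_A^*V_B$; concretely $V_A$ arises from the dilation with $V_A = \widetilde{P}_1(\pi_1 A)\cdots\,$—more precisely one takes $\mathcal{K}$ spanned by vectors of the form $P(E)Vh$ and $V_A h = P(A)Vh$, so that $V_A^*V_B = V^*P(A)P(B)V = V^*P(A\cap B)V = Q(A\cap B)=K(A,B)$. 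For the first derivative, fix $B\in\mathscr{B}_{X_2}$ and consider the two kernels on $\mathscr{B}_{X_1}\times\mathscr{B}_{X_1}$ given by $L_B(A,A')=Q\big((A\times B)\cap(A'\times X_2)\big)=Q\big((A\cap A')\times B\big)$ and $K_1(A,A')=Q_1(A\cap A')=Q\big((A\cap A')\times X_2\big)$. One checks $L_B\le K_1$ in the sense of part \eqref{enu:3-7-2} of \prettyref{rem:3-7}, since the difference is $Q\big((A\cap A')\times(X_2\setminus B)\big)$, which is again a positive-kernel value of the same form (apply \prettyref{lem:4-2} with $B$ replaced by $X_2\setminus B$, restricted to the first coordinate). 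By the Radon--Nikodym half of \prettyref{thm:c6}, there is $\Gamma_B$ with $0\le\Gamma_B\le I$ on the RKHS such that $L_B(A,A')=W_A^*\Gamma_B W_{A'}$ where $W_A$ are the factorization maps for $K_1$. The point is then to verify that the operator $P_2(B):=\Gamma_B$, transported to $\mathscr{H}$, coincides with $V^*\widetilde{P}_2(B)V$ and is genuinely a projection: this follows because $L_B(A,A') = V^*P(A\times X_2)P(X_1\times B)P(A'\times X_2)V$ exhibits the factorization with $\Gamma_B$ implemented by the projection $\widetilde{P}_2(B)$ on the relevant invariant subspace, and $\widetilde{P}_2$ is a PVM. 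This establishes \eqref{eq:d2-1}; \eqref{eq:D3} is obtained by swapping the roles of the two coordinates, and the commutativity $P_1(A)P_2(B)=P_2(B)P_1(A)$ descends from the commutativity of $\widetilde P_1,\widetilde P_2$ noted above.

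The main obstacle I anticipate is precisely the claim that $P_1, P_2$ are \emph{projection}-valued (not merely positive-operator-valued) measures on $\mathscr{H}$. For a generic POVM the marginal compressions $V^*\widetilde P_i(\cdot)V$ are only POVMs, so the theorem must be using the canonical minimal (RKHS) dilation of \prettyref{rem:3-7}\eqref{enu:3-7-4}, in which the range of $V$ together with the operators $P(E)$ generates $\mathscr{K}$, and in which one identifies $\mathscr{H}$ with a subspace so that the relevant products $P(A\times X_2)P(X_1\times B)P(A'\times X_2)$ already land where needed. Making this identification airtight — i.e. showing that in the canonical model the Radon--Nikodym derivative of $Q(\cdot\times B)$ with respect to $Q_1$ is exactly the honest spectral projection $P_2(B)$ — is the crux. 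I would handle it by computing $\frac{dQ(\cdot\times B)}{dQ_1}$ directly from the defining property \eqref{eq:c6}: one needs the unique $0\le\Gamma\le I$ on the $K_1$-RKHS with $Q\big((A\cap A')\times B\big)=W_A^*\Gamma W_{A'}$, and the candidate $\widetilde P_2(B)$ both satisfies this and is idempotent and self-adjoint on that RKHS, so by uniqueness of the Radon--Nikodym derivative the two agree, giving \eqref{eq:d2-1} as stated. The remaining verifications ($\sigma$-additivity of $B\mapsto P_2(B)$, normalization $P_2(X_2)=I$) are routine consequences of the corresponding properties of $P$ and \eqref{eq:B1}.
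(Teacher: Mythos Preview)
Your approach is essentially the paper's: dilate $Q=V^*PV$, set $P_1(A)=P(A\times X_2)$, $P_2(B)=P(X_1\times B)$ as commuting PVMs on the dilation space, form the kernels $K_1(A,A')=Q_1(A\cap A')$ and $L_B(A,A')=Q((A\cap A')\times B)$, and invoke \prettyref{thm:c6} to read off $P_2(B)$ as the Radon--Nikodym derivative. Your explicit check that $L_B\le K_1$ via $Q((A\cap A')\times(X_2\setminus B))\ge 0$ is a nice addition; the paper merely asserts this.

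Where you go astray is in the ``main obstacle'' paragraph. The PVMs $P_1,P_2$ in the statement are \emph{not} meant to live on $\mathscr{H}$; despite the unfortunate notation ``$P_i:\mathscr{B}_{X_i}\to\mathscr{H}$'' in the statement, the paper's proof makes clear that $P_i$ are exactly your $\widetilde P_i$, i.e., projections on the dilation space $\mathscr{K}$. The Radon--Nikodym derivative $\Gamma$ of \prettyref{thm:c6} is by definition an operator on the RKHS $\mathcal{K}$ (equivalently, on $\mathscr{K}$), not on $\mathscr{H}$, so there is nothing to ``transport to $\mathscr{H}$''. In particular the compressions $V^*\widetilde P_i(\cdot)V$ are just the marginal POVMs $Q_i$ --- these are never claimed to be projection-valued, and your attempt to force $\Gamma_B$ to agree with $V^*\widetilde P_2(B)V$ on $\mathscr{H}$ is a misreading. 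Once you drop that detour, your argument and the paper's coincide: the factorization $L_B(A,A')=V^*P(A\times X_2)\,\widetilde P_2(B)\,P(A'\times X_2)V$ directly exhibits $\widetilde P_2(B)$ as a $\Gamma$ satisfying \eqref{eq:c6}, with $0\le\widetilde P_2(B)\le I_{\mathscr{K}}$ automatic since it is a projection. (Your appeal to ``uniqueness'' of the Radon--Nikodym derivative is also unneeded and not asserted in \prettyref{thm:c6}; existence of one such $\Gamma$ is all that is claimed.)
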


\begin{proof}
Let $Q=V^{*}PV$ be the dilation of $Q$ from \prettyref{thm:B5}.
Then, 
\begin{align}
Q\left(A\times B\right) & =V^{*}P\left(A\times B\right)V\nonumber \\
 & =V^{*}P\left(\left(A\times X_{2}\right)\cap\left(X_{1}\times B\right)\right)V\nonumber \\
 & =V^{*}P\left(A\times X_{2}\right)P\left(X_{1}\times B\right)V\nonumber \\
 & =V^{*}P_{1}\left(A\right)P_{2}\left(B\right)V\label{eq:D4}
\end{align}
where $P_{1}\left(A\right)=P\left(A\times X_{2}\right)$, and $P_{2}\left(B\right)=P\left(X_{1}\times B\right)$,
for all $A\in\mathscr{B}_{X_{1}}$ and $B\in\mathscr{B}_{X_{2}}$.
It is clear that $P_{i}$'s are commuting PVMs. 

Note that 
\[
V^{*}P_{1}\left(\cdot\right)V=Q_{1}\left(\cdot\right),
\]
and 
\[
V^{*}P_{1}\left(\cdot\right)P_{2}\left(B\right)V=Q\left(\cdot\times B\right)
\]
for every fixed $B\in\mathscr{B}_{X_{2}}$. 

It then follows from \prettyref{thm:c6} and \prettyref{lem:4-2}
that $P_{2}\left(B\right)$ is the Radon-Nikodym derivative of $Q\left(\cdot\times B\right)$
with respect to $Q_{1}$, which is \eqref{eq:d2-1}. The proof of
\eqref{eq:D3} is similar and is omitted. 

Specifically, we consider two operator-valued kernels defined on $\mathscr{B}_{X_{1}}\times\mathscr{B}_{X_{1}}$:
For fix $B\in\mathscr{B}_{X_{2}}$, let
\[
L\left(A_{1},A_{2}\right):=Q\left(\left(A_{1}\cap A_{2}\right)\times B\right),
\]
and 
\[
K\left(A_{1},A_{2}\right):=Q_{1}\left(A_{1}\times A_{2}\right).
\]
By \prettyref{lem:4-2}, it is easy to see that both $K$ and $L$
are positive definite. The identity \eqref{eq:D4} combined with \prettyref{thm:c6}
implies that $L\leq K$ (see part \eqref{enu:3-7-2} of \prettyref{rem:3-7}),
and the associated Radon-Nikodym derivative is given by $P_{2}\left(B\right)$. 
\end{proof}
\begin{cor}
Fix $B\in\mathscr{B}_{X_{2}}$, consider $Q\left(\cdot\times B\right)$
and $Q_{1}$ as p.d. kernels on $\mathscr{B}_{X_{1}}\times\mathscr{B}_{X_{1}}$.
Then 
\[
Q\left(\cdot\times B\right)\leq Q_{1}\left(\cdot\right).
\]
Similarly, for fixed $A\in\mathscr{B}_{X_{1}}$, 
\[
Q\left(A\times\cdot\right)\leq Q_{2}\left(\cdot\right).
\]
 
\end{cor}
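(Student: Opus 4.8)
The plan is to obtain this directly from the dilation identity \eqref{eq:D4} together with the Radon--Nikodym characterization \eqref{eq:c5}--\eqref{eq:c6} of \prettyref{thm:c6}, since the required inequality was in fact already established inside the proof of \prettyref{thm:4-2}. First I would fix $B\in\mathscr{B}_{X_{2}}$ and make precise the two kernels being compared: following \prettyref{lem:4-2} and part \eqref{enu:3-7-4} of \prettyref{rem:3-7}, ``$Q\left(\cdot\times B\right)$'' denotes the kernel $L\left(A_{1},A_{2}\right):=Q\left(\left(A_{1}\cap A_{2}\right)\times B\right)$ on $\mathscr{B}_{X_{1}}\times\mathscr{B}_{X_{1}}$, and ``$Q_{1}\left(\cdot\right)$'' denotes $K\left(A_{1},A_{2}\right):=Q_{1}\left(A_{1}\cap A_{2}\right)$ on the same index set; both are positive definite by \prettyref{lem:4-2}.

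Next I would invoke the dilation $Q=V^{*}PV$ of \prettyref{thm:B5} together with the commuting PVMs $P_{1}\left(A\right)=P\left(A\times X_{2}\right)$ and $P_{2}\left(B\right)=P\left(X_{1}\times B\right)$ from the proof of \prettyref{thm:4-2}. Since $P_{1}$ is projection valued, $P_{1}\left(A_{1}\cap A_{2}\right)=P_{1}\left(A_{1}\right)P_{1}\left(A_{2}\right)$, so
\[
L\left(A_{1},A_{2}\right)=V^{*}P_{1}\left(A_{1}\right)P_{2}\left(B\right)P_{1}\left(A_{2}\right)V,\qquad K\left(A_{1},A_{2}\right)=V^{*}P_{1}\left(A_{1}\right)P_{1}\left(A_{2}\right)V .
\]
Putting $\xi_{i}:=P_{1}\left(A_{i}\right)Vh_{i}\in\mathscr{K}$ for an arbitrary finite family $h_{i}\in\mathscr{H}$, $A_{i}\in\mathscr{B}_{X_{1}}$, the difference of the associated quadratic forms collapses to $\left\langle \sum_{i}\xi_{i},\left(I_{\mathscr{K}}-P_{2}\left(B\right)\right)\sum_{i}\xi_{i}\right\rangle _{\mathscr{K}}\geq0$, because $P_{2}\left(B\right)$ is an orthogonal projection and hence $0\leq I_{\mathscr{K}}-P_{2}\left(B\right)$. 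By part \eqref{enu:3-7-2} of \prettyref{rem:3-7} this says precisely $L\leq K$, i.e.\ $Q\left(\cdot\times B\right)\leq Q_{1}\left(\cdot\right)$. Equivalently, \eqref{eq:D4} together with \prettyref{thm:c6} exhibits the positive contraction $P_{2}\left(B\right)$ as the Radon--Nikodym derivative $dL/dK$, which by \eqref{eq:c5}--\eqref{eq:c6} forces $L\leq K$. The companion inequality $Q\left(A\times\cdot\right)\leq Q_{2}\left(\cdot\right)$ for fixed $A\in\mathscr{B}_{X_{1}}$ follows verbatim after interchanging the roles of the indices $1\leftrightarrow2$.

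I do not expect any genuine obstacle here: the only delicate point is the bookkeeping convention that identifies a POVM with its associated p.d. kernel $\left(A,B\right)\mapsto Q\left(A\cap B\right)$, so that ``$Q\left(\cdot\times B\right)\leq Q_{1}$'' must be read as an inequality between the kernels $L$ and $K$ above rather than as a pointwise inequality of operator values. Once that convention is fixed, the corollary is an immediate consequence of the final paragraph of the proof of \prettyref{thm:4-2}.
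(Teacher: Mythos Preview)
Your proposal is correct and follows exactly the route the paper intends: the corollary is stated without a separate proof because the ordering $L\leq K$ is already established in the final paragraph of the proof of \prettyref{thm:4-2}, and you have simply unpacked that paragraph (including the correct reading of the kernels as $\left(A_{1},A_{2}\right)\mapsto Q\left(\left(A_{1}\cap A_{2}\right)\times B\right)$ and $\left(A_{1},A_{2}\right)\mapsto Q_{1}\left(A_{1}\cap A_{2}\right)$). Your direct verification via $0\leq I_{\mathscr{K}}-P_{2}\left(B\right)$ is a welcome elaboration of the same argument.
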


\begin{cor}
Let $Q$ and $Q_{i}=Q\circ\pi_{i}^{-1}$, $i=1,2$, be as specified
in \eqref{eq:b1}. Let $\mathscr{H}_{Q}$, resp. $\mathscr{H}_{Q_{i}}$,
be the corresponding Hilbert spaces; see \prettyref{thm:4-2}. Then
$Q_{i}\rightarrow Q$ induces a contraction $f_{i}\mapsto f_{i}\circ\pi_{i}$
from $\mathscr{H}_{Q_{i}}$ into $\mathscr{H}_{Q}$, i.e., we have
\[
\left\Vert f_{i}\circ\pi_{i}\right\Vert _{\mathscr{H}_{Q}}\leq\left\Vert f_{i}\right\Vert _{\mathscr{H}_{Q_{i}}},
\]
for all $f_{i}\in\mathscr{H}_{Q_{i}}$, $i=1,2$. 
\end{cor}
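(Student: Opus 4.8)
The plan is to realize the two reproducing kernel Hilbert spaces concretely, following \prettyref{thm:c6}, and then to exhibit the induced map directly on the dense span of ``kets''. As in part \eqref{enu:3-7-4} of \prettyref{rem:3-7} together with \prettyref{lem:4-2}, let $K\left(A,B\right)\coloneqq Q\left(A\cap B\right)$ on $\mathscr{B}_{X}\times\mathscr{B}_{X}$ and $K_{i}\left(A,B\right)\coloneqq Q_{i}\left(A\cap B\right)$ on $\mathscr{B}_{X_{i}}\times\mathscr{B}_{X_{i}}$; by \prettyref{thm:c6} these are $\mathscr{L}\left(\mathscr{H}\right)$-valued positive definite kernels, and $\mathscr{H}_{Q}$, resp.\ $\mathscr{H}_{Q_{i}}$, is the associated RKHS, spanned by the vectors $e_{A}h\coloneqq K\left(\cdot,A\right)h$, resp.\ $e_{A}^{\left(i\right)}h\coloneqq K_{i}\left(\cdot,A\right)h$, with $\left\langle e_{A}h,e_{B}k\right\rangle _{\mathscr{H}_{Q}}=\left\langle h,Q\left(A\cap B\right)k\right\rangle _{\mathscr{H}}$ and the analogous formula in $\mathscr{H}_{Q_{i}}$.

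The single algebraic fact driving the argument is that $\pi_{i}^{-1}$ preserves intersections, so $\pi_{i}^{-1}\left(A\cap B\right)=\pi_{i}^{-1}\left(A\right)\cap\pi_{i}^{-1}\left(B\right)$, and since $Q_{i}=Q\circ\pi_{i}^{-1}$ by \eqref{eq:b1} we get $K_{i}\left(A,B\right)=Q\left(\pi_{i}^{-1}\left(A\right)\cap\pi_{i}^{-1}\left(B\right)\right)=K\left(\pi_{i}^{-1}\left(A\right),\pi_{i}^{-1}\left(B\right)\right)$; that is, $K_{i}$ is the pullback of $K$ along $\pi_{i}^{-1}\colon\mathscr{B}_{X_{i}}\to\mathscr{B}_{X}$. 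I would then define $W_{i}$ on the ket-span of $\mathscr{H}_{Q_{i}}$ by $W_{i}\left(\sum_{j}e_{A_{j}}^{\left(i\right)}h_{j}\right)\coloneqq\sum_{j}e_{\pi_{i}^{-1}\left(A_{j}\right)}h_{j}$. A one-line Gram-matrix computation using the pullback identity gives $\left\Vert \sum_{j}e_{\pi_{i}^{-1}\left(A_{j}\right)}h_{j}\right\Vert _{\mathscr{H}_{Q}}^{2}=\sum_{j,l}\left\langle h_{j},Q_{i}\left(A_{j}\cap A_{l}\right)h_{l}\right\rangle _{\mathscr{H}}=\left\Vert \sum_{j}e_{A_{j}}^{\left(i\right)}h_{j}\right\Vert _{\mathscr{H}_{Q_{i}}}^{2}$, so $W_{i}$ is well defined on the quotient and isometric; hence it extends to an isometry $\mathscr{H}_{Q_{i}}\to\mathscr{H}_{Q}$, in particular a contraction, which is the asserted inequality (indeed a sharpening of it to an equality).

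It remains to identify $W_{i}$ with ``$f_{i}\mapsto f_{i}\circ\pi_{i}$''. Using the reproducing property in both spaces, for $f_{i}\in\mathscr{H}_{Q_{i}}$, $B\in\mathscr{B}_{X_{i}}$, $h\in\mathscr{H}$, one computes $\left\langle \left(W_{i}f_{i}\right)\left(\pi_{i}^{-1}\left(B\right)\right),h\right\rangle _{\mathscr{H}}=\left\langle W_{i}f_{i},e_{\pi_{i}^{-1}\left(B\right)}h\right\rangle _{\mathscr{H}_{Q}}=\left\langle f_{i},e_{B}^{\left(i\right)}h\right\rangle _{\mathscr{H}_{Q_{i}}}=\left\langle f_{i}\left(B\right),h\right\rangle _{\mathscr{H}}$, so $W_{i}f_{i}$ is precisely the canonical lift of $f_{i}$ through $\pi_{i}$, i.e.\ $f_{i}\circ\pi_{i}$ in the sense intended.

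I do not expect a genuine obstacle: once the kernels are identified the estimate is immediate. The two points that deserve explicit care are (i) that $W_{i}$ is only an isometric embedding and need not be onto --- its range is the closed span of $\left\{ e_{C}h:C\in\pi_{i}^{-1}\left(\mathscr{B}_{X_{i}}\right),\,h\in\mathscr{H}\right\} $, a generally proper subspace of $\mathscr{H}_{Q}$, which is why the conclusion is phrased as a contraction rather than a unitary; and (ii) the matching of the abstractly defined $W_{i}$ with honest precomposition, handled above. An alternative, equally short route uses the factored form $Q\left(A\cap B\right)=V_{A}^{*}V_{B}$ and $Q_{i}\left(A\cap B\right)=\left(V_{A}^{\left(i\right)}\right)^{*}V_{B}^{\left(i\right)}$ from \prettyref{thm:c6} and observes that the pullback identity matches the families $\left\{ V_{\pi_{i}^{-1}\left(A\right)}\right\} $ and $\left\{ V_{A}^{\left(i\right)}\right\} $, producing the same isometry.
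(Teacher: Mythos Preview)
Your argument is correct and complete. The paper states this corollary without proof, so there is nothing to compare against directly; your route via the pullback identity $K_{i}(A,B)=K(\pi_{i}^{-1}(A),\pi_{i}^{-1}(B))$ is exactly the natural one suggested by the surrounding material (\prettyref{thm:c6}, \prettyref{lem:4-2}, and part~\eqref{enu:3-7-4} of \prettyref{rem:3-7}), and the Gram-matrix computation is clean.

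In fact you prove more than the corollary asserts: you obtain an \emph{isometry}, not merely a contraction, and you correctly flag this. You also handle the one genuinely delicate point, namely matching the abstractly defined $W_{i}$ with the intended ``$f_{i}\mapsto f_{i}\circ\pi_{i}$''; the paper's notation here is informal (since elements of $\mathscr{H}_{Q}$ are functions on $\mathscr{B}_{X}$, not on $X$), and your reading --- that the lift is precomposition with $\pi_{i}^{-1}$ on the $\sigma$-algebra side --- is the only sensible one, verified by your reproducing-property calculation on cylinder sets.
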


\begin{thm}
\label{thm:4-5}Let $Q:\mathscr{B}_{X}\rightarrow\mathcal{L}\left(\mathscr{H}\right)$
be a POVM, $Q_{i}=Q\circ\pi_{i}^{-1}$, where $\pi_{i}:X=X_{1}\times X_{2}\rightarrow X_{i}$,
$i=1,2$, and 
\[
Q\left(A\times B\right)=V^{*}P_{1}\left(A\right)P_{2}\left(B\right)V
\]
as in \eqref{eq:D4}. Then 
\begin{equation}
Q_{1}\otimes Q_{2}:\mathscr{B}_{X}\rightarrow\mathcal{L}\left(\mathscr{H}\otimes\mathscr{H}\right)
\end{equation}
is a POVM in $\mathscr{H}\otimes\mathscr{H}$, and 
\begin{equation}
Q_{1}\otimes Q_{2}\left(A\times B\right)=\left(V\otimes V\right)^{*}P_{1}\left(A\right)\otimes P_{2}\left(B\right)\left(V\otimes V\right),\label{eq:D6}
\end{equation}
for all $A\in\mathscr{B}_{X_{1}}$, $B\in\mathscr{B}_{X_{2}}$.

Moreover, 
\begin{equation}
\frac{dQ_{1}\otimes Q_{2}\left(\cdot\times B\right)}{dQ_{1}\otimes Q_{2}\left(\cdot\times X_{2}\right)}=I\otimes P_{2}\left(B\right),\label{eq:D7}
\end{equation}
and 
\begin{equation}
\frac{dQ_{1}\otimes Q_{2}\left(A\times\cdot\right)}{dQ_{1}\otimes Q_{2}\left(X_{1}\times\cdot\right)}=P_{1}\left(A\right)\otimes I.\label{eq:D8}
\end{equation}
\end{thm}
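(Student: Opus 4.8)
The plan is to reduce Theorem~\ref{thm:4-5} to the already-established machinery: the dilation of $Q$ from Theorem~\ref{thm:B5} in the form \eqref{eq:D4}, together with Theorem~\ref{thm:c6} on Radon-Nikodym derivatives of operator-valued p.d.\ kernels, applied now on the tensor product Hilbert space $\mathscr{H}\otimes\mathscr{H}$.

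First I would verify that $Q_1\otimes Q_2$ is a POVM on $\mathscr{B}_X$ with values in $\mathscr{L}(\mathscr{H}\otimes\mathscr{H})$. Starting from $Q_i=V^*P_iV$ where $P_1(A)=P(A\times X_2)$, $P_2(B)=P(X_1\times B)$ are the commuting PVMs of \eqref{eq:D4}, one checks directly that $P_1\otimes P_2$ (extended $\sigma$-additively from product sets in the obvious way) is a PVM on $\mathscr{B}_X$ acting in $\mathscr{K}\otimes\mathscr{K}$: each $P_1(A)\otimes P_2(B)$ is a projection since $P_1(A)$ and $P_2(B)$ are, the value on $X=X_1\times X_2$ is $I\otimes I$, and multiplicativity/$\sigma$-additivity follow from those of $P_1$ and $P_2$ separately (here I would invoke the standard extension of a product of PVMs to the product $\sigma$-algebra). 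Since $V\otimes V:\mathscr{H}\otimes\mathscr{H}\to\mathscr{K}\otimes\mathscr{K}$ is an isometry (tensor product of isometries), the compression $(V\otimes V)^*\big(P_1(A)\otimes P_2(B)\big)(V\otimes V)$ is a positive operator, is the identity when $A=X_1$, $B=X_2$, and inherits $\sigma$-additivity, hence defines a POVM; this is \eqref{eq:D6}, and it also identifies $Q_1\otimes Q_2$ as the compression by $V\otimes V$ of the PVM $P_1\otimes P_2$, which is precisely the dilation picture \eqref{eq:D4} one level up.

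Next, for \eqref{eq:D7} I would fix $B\in\mathscr{B}_{X_2}$ and mimic the proof of Theorem~\ref{thm:4-2} verbatim on $\mathscr{H}\otimes\mathscr{H}$. Define, on $\mathscr{B}_{X_1}\times\mathscr{B}_{X_1}$, the kernels $L(A_1,A_2):=Q_1\otimes Q_2\big((A_1\cap A_2)\times B\big)$ and $K(A_1,A_2):=Q_1\otimes Q_2\big((A_1\cap A_2)\times X_2\big)$. By Lemma~\ref{lem:4-2} applied to the POVM $Q_1\otimes Q_2$ (which we have just shown has a dilation via $P_1\otimes P_2$ and $V\otimes V$), both $K$ and $L$ are positive definite. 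Using \eqref{eq:D6} one computes $L(A_1,A_2)=(V\otimes V)^*\big(P_1(A_1\cap A_2)\otimes P_2(B)\big)(V\otimes V)$ and $K(A_1,A_2)=(V\otimes V)^*\big(P_1(A_1\cap A_2)\otimes I\big)(V\otimes V)$, so in the RKHS factorization $K(A_1,A_2)=W_{A_1}^*W_{A_2}$ of Theorem~\ref{thm:c6} we get $L(A_1,A_2)=W_{A_1}^*\Gamma W_{A_2}$ with $\Gamma$ the operator induced by $I\otimes P_2(B)$; since $0\le I\otimes P_2(B)\le I$, we have $0\le\Gamma\le I$, hence $L\le K$ and $\frac{dL}{dK}=I\otimes P_2(B)$ by the uniqueness/definition in \eqref{eq:c7}. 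This is \eqref{eq:D7}; \eqref{eq:D8} follows by the symmetric argument with the roles of the two coordinates interchanged.

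The main obstacle is the bookkeeping of the tensor-product constructions: making precise that $P_1\otimes P_2$ extends to a genuine PVM on the product $\sigma$-algebra $\mathscr{B}_X$ (not merely on the algebra of rectangles) and that $V\otimes V$ is the right dilating isometry, so that Lemma~\ref{lem:4-2} and Theorem~\ref{thm:c6} apply to $Q_1\otimes Q_2$ without modification. Once that identification is in place, everything else is a transcription of the proof of Theorem~\ref{thm:4-2}. I would therefore spend the bulk of the write-up on the POVM claim and \eqref{eq:D6}, then dispatch \eqref{eq:D7} and \eqref{eq:D8} quickly by citing Theorem~\ref{thm:4-2}'s argument applied to $Q_1\otimes Q_2$ in place of $Q$, noting that the commuting PVMs furnished by \eqref{eq:D6} are exactly $P_1(\cdot)\otimes I$ and $I\otimes P_2(\cdot)$.
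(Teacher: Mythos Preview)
Your proposal is correct and follows essentially the same line as the paper's proof: establish \eqref{eq:D6} by recognizing $(V\otimes V)^*\big(P_1(\cdot)\otimes P_2(\cdot)\big)(V\otimes V)$ as a dilation of $Q_1\otimes Q_2$, then observe that $P_1(\cdot)\otimes I$ and $I\otimes P_2(\cdot)$ are commuting PVMs so that the argument of Theorem~\ref{thm:4-2} (via Lemma~\ref{lem:4-2} and Theorem~\ref{thm:c6}) yields \eqref{eq:D7}--\eqref{eq:D8}. The paper verifies \eqref{eq:D6} on elementary tensors and leaves the extension of $P_1\otimes P_2$ to $\mathscr{B}_X$ implicit, whereas you are more explicit about this bookkeeping, but the substance is the same.
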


\begin{proof}
For \eqref{eq:D6}, it suffices to check the following: Let $a\otimes b\in\mathscr{H}\otimes\mathscr{H}$,
then 
\begin{eqnarray*}
 &  & \left\langle a\otimes b,Q_{1}\left(A\right)\otimes Q_{2}\left(B\right)\left(a\otimes b\right)\right\rangle _{\mathscr{H}\otimes\mathscr{H}}\\
 & = & \left\langle a,Q_{1}\left(A\right)a\right\rangle _{\mathscr{H}}\left\langle b,Q_{2}\left(B\right)b\right\rangle _{\mathscr{H}}\\
 & = & \left\langle a,V^{*}P_{1}\left(A\right)Va\right\rangle _{\mathscr{H}}\left\langle b,V^{*}P_{2}\left(B\right)Vb\right\rangle _{\mathscr{H}}\\
 & = & \left\langle a\otimes b,\left(V\otimes V\right)^{*}P_{1}\left(A\right)\otimes P_{2}\left(B\right)\left(V\otimes V\right)\left(\left(a\otimes b\right)\right)\right\rangle _{\mathscr{H}\otimes\mathscr{H}}.
\end{eqnarray*}

The right-side of \eqref{eq:D6} is a dilation of $Q_{1}\otimes Q_{2}$,
and it can be written as 
\[
Q_{1}\otimes Q_{2}\left(A\times B\right)=\left(V\otimes V\right)^{*}\left(P_{1}\left(A\right)\otimes I\right)\left(I\otimes P_{2}\left(B\right)\right)\left(V\otimes V\right),
\]
where $P_{1}\left(A\right)\otimes I$ and $I\otimes P_{2}\left(B\right)$
are commuting PVMs in $\mathscr{H}\otimes\mathscr{H}$, defined on
$\mathscr{B}_{X_{1}}$ and $\mathscr{B}_{X_{2}}$, respectively. Therefore,
\eqref{eq:D7}--\eqref{eq:D8} follow from a similar argument as
in the proof of \prettyref{thm:4-2}.
\end{proof}

\section{Beyond the $Q$-polymorphisms}\label{sec:5}

Here, we aim to extend the $Q$-version of quantum polymorphisms (\prettyref{sec:4})
in two different settings. Recall that in the classical (commutative)
setting, a polymorphism $\mu$ for $\mu_{1}$ and $\mu_{2}$ is a
joint probability distribution having $\mu_{1},\mu_{2}$ as marginals.

\subsection{Quantum states}

One possible extension is based on the following questions:
\begin{itemize}
\item What is the right notion of polymorphism $poly(s_{1},s_{2})$ for
a fixed pair of quantum states $s_{i}$? What properties of this notion
differ from those of the classical case in our paper?
\item Can we construct quantum graphs analogous to the graphs from \cite{MR4730761},
having the sets $poly(s_{1},s_{2})$ as edges? What is the set of
extreme-points in $poly(s_{1},s_{2})$?
\item What is the role of entanglement in such an extension of the results
in our paper to quantum?
\end{itemize}
In particular, we note the following correspondence: 
\begin{align*}
\begin{matrix}\text{quantum states (trace operators)}\\
\text{on \ensuremath{\mathscr{H}_{1}\otimes\mathscr{H}_{2}}}
\end{matrix} & \Longleftrightarrow\text{joint measures/distributions}\\
\text{partial traces } & \Longleftrightarrow\text{marginal distributions}
\end{align*}

\subsection{$\mathscr{L}\left(\mathscr{H}\right)$-valued kernels}

The second approach is through operator-valued positive definite (p.d.)
kernels. In terms of kernels, an analogous notion of polymorphism
is as follows: 

Given two scalar-valued kernels $c_{1},c_{2}$ defined on $S\times S$,
a polymorphism is a $\mathcal{L}\left(\mathscr{H}\right)$-valued
p.d. kernel on $S\times S$, such that $c_{1}=tr\left(\rho_{1}K\right)$,
and $c_{2}=tr\left(\rho_{2}K\right)$. 

Here, the states $\rho_{i}$ play the role of canonical coordinate
projections. The idea is that by slicing $K$ in various ways, one
is able to recover $c_{i}$ as certain ``marginals''. 

However, without further restrictions, any two scalar valued p.d.
kernels can be linked this way:
\begin{lem}
For $c_{1},c_{2}:S\times S\rightarrow\mathbb{C}$, p.d., there exists
an $\mathcal{L}\left(\mathscr{H}\right)$-valued p.d. kernel $K$
on $S\times S$, and positive trace class operators $\rho_{1},\rho_{2}$,
such that $c_{i}=tr\left(\rho_{i}K\right)$. 
\end{lem}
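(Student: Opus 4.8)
The plan is to build $K$ as a $2\times 2$ block-operator-valued kernel on the fixed Hilbert space $\mathscr{H}=\mathbb{C}^2$ (or any space of dimension at least $2$), so that the two diagonal "slices" reproduce $c_1$ and $c_2$. Concretely, I would set
\[
K(s,t):=\begin{pmatrix} c_1(s,t) & 0\\ 0 & c_2(s,t)\end{pmatrix}\in\mathscr{L}(\mathbb{C}^2),
\]
and take $\rho_1:=\operatorname{diag}(1,0)$, $\rho_2:=\operatorname{diag}(0,1)$, which are positive trace-class (indeed rank-one projections) on $\mathbb{C}^2$. Then $tr(\rho_1 K(s,t))=c_1(s,t)$ and $tr(\rho_2 K(s,t))=c_2(s,t)$ by inspection.

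The one thing that actually needs an argument is that $K$ is an $\mathscr{L}(\mathbb{C}^2)$-valued positive definite kernel in the sense of part (1) of \prettyref{rem:3-7}. For $(s_i)_1^n$ in $S$ and $(h_i)_1^n$ in $\mathbb{C}^2$, write $h_i=(a_i,b_i)$; then
\[
\sum_{i,j}\langle h_i,K(s_i,s_j)h_j\rangle=\sum_{i,j}\overline{a_i}c_1(s_i,s_j)a_j+\sum_{i,j}\overline{b_i}c_2(s_i,s_j)b_j\ge 0,
\]
since each of the two sums is nonnegative by positive definiteness of $c_1$ and $c_2$ respectively. So $K$ is p.d., and we are done. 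Equivalently, one can invoke \prettyref{thm:c6}: a direct sum of p.d. kernels is p.d.

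I do not expect a genuine obstacle here; the content of the lemma is precisely that the defining relations $c_i=tr(\rho_i K)$ impose no compatibility condition, and the block-diagonal construction makes this transparent. The only points to be careful about are (a) stating explicitly which Hilbert space $\mathscr{H}$ is used (any with $\dim\mathscr{H}\ge 2$ works; one simply extends $\rho_1,\rho_2$ and $K$ by zero on the orthogonal complement of $\mathbb{C}^2$), and (b) noting that $\rho_1,\rho_2$ as constructed are automatically trace-class since they are finite rank. If one additionally wanted the $\rho_i$ to be faithful normal states, one could instead take $\rho_1=\operatorname{diag}(1-\varepsilon,\varepsilon)$ and solve for $K$, but for the statement as given the projections suffice.
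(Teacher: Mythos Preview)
Your proof is correct and is essentially identical to the paper's own argument: the paper takes $\rho_1=|a\rangle\langle a|$, $\rho_2=|b\rangle\langle b|$ for orthonormal $a,b\in\mathscr{H}$ and sets $K=c_1\otimes\rho_1+c_2\otimes\rho_2$, which in the basis $\{a,b\}$ is exactly your block-diagonal $K=\operatorname{diag}(c_1,c_2)$ with $\rho_i$ the coordinate projections. If anything, you supply more detail than the paper does, since you explicitly verify the $\mathscr{L}(\mathscr{H})$-valued positive definiteness of $K$.
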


\begin{proof}
Given $c_{1},c_{2}$, let $\mathscr{H}$ be any Hilbert space, and
set $\rho_{1}=\left|a\left\rangle \right\langle a\right|$, $\rho_{2}=\left|b\left\rangle \right\langle b\right|$,
where $a,b$ are unit vectors in $\mathscr{H}$ and $a\perp b$. Then
the kernel $K=c_{1}\otimes\rho_{1}+c_{2}\otimes\rho_{2}$ has the
desired properties. 
\end{proof}
Several key questions arise from this context:
\begin{itemize}
\item What are the scalar valued covariance kernels $c$ which admit such
representations?
\item Fix $K$, find $\text{SVK}(K)\coloneqq$ \{all scalar valued covariance
kernels $c$ with this representation for some density $\rho$\}.
\item Fix $\rho$, find $\text{SVK}(\rho)\coloneqq$ \{all scalar valued
covariance kernels $c$ with this representation for some $K$\}.
\item Introduce an action on all scalar valued covariance kernels $c$ via
pairs $(K,\rho)$. Is that action transitive? What about the case
when $K$ is the same for two scalar valued covariance kernels?
\end{itemize}
Some observations related to our discussion of normal states, i.e.,
those from normalized trace class operators $\rho$, and the corresponding
scalar valued p.d. kernels $trace\left(\rho K\left(\cdot,\cdot\right)\right)$
defined from some operator valued $K$: 
\begin{enumerate}
\item Compact operators are precisely the norm limits of finite rank operators,
which means that trace class operators, being compact, fit this framework.
\item There is a correspondence between $\mathcal{L}\left(\mathscr{H}\right)$-valued
p.d. kernels $K$ on $S\times S$, and scalar valued p.d. kernels
$\tilde{K}$ on $T\times T$ , where $T=\left(S,\mathscr{H}\right)$. 
\end{enumerate}
Questions:
\begin{itemize}
\item Can we use the RKHS $\mathscr{H}(\tilde{K})$ to understand the $K$
induced scalar valued states $trace\left(\rho K\left(\cdot,\cdot\right)\right)$
?
\item How does the non-product form for $\rho$ (entanglement states) reflect
itself in the properties of $trace\left(\rho K\left(\cdot,\cdot\right)\right)$
? 
\item Are polymorphisms the natural framework for addressing these questions?
\\
\end{itemize}

\bibliographystyle{amsalpha}
\bibliography{ref}

\end{document}